\DeclareMathAlphabet\mathbfcal{OMS}{cmsy}{b}{n}
\newtheorem{theorem}{Theorem}
\newtheorem{proposition}{Proposition}
\begin{document}

\title{\LARGE On the Robustness of AFDM and OTFS Against Passive Eavesdroppers}

\author{Vincent Savaux, \IEEEmembership{Senior Member,~IEEE},
Hyeon Seok Rou, \IEEEmembership{Member,~IEEE},
Zeping Sui, \IEEEmembership{Member,~IEEE}, \\
Giuseppe Thadeu Freitas de Abreu, \IEEEmembership{Senior Member,~IEEE}, 
Zilong Liu, \IEEEmembership{Senior Member,~IEEE}

\thanks{V.~Savaux is with the Institute of Research and Technology b\textless\textgreater com, 35510 Cesson S{\'e}vign{\'e}, France (email: vincent.savaux@b-com.com).}
\thanks{H.~S.~Rou and G.~T.~F.~de~Abreu are with the School of Computer Science and Engineering, Constructor University Bremen, Campus Ring 1, 28759 Bremen, Germany (email: [hrou, gabreu]@constructor.university).}
\thanks{Z. Sui and Z. Liu are with the School of Computer Science and Electronics Engineering, University of Essex, Colchester CO4 3SQ, U.K. (email: zepingsui@outlook.com, zilong.liu@essex.ac.uk).}
}

\markboth{To Be Submitted To IEEE Wireless Communications Letters}%
{Shell \MakeLowercase{\textit{et al.}}: A Sample Article Using IEEEtran.cls for IEEE Journals}


\maketitle

\begin{abstract}
We investigate the robustness of affine frequency division multiplexing (AFDM) and orthogonal time frequency space (OTFS) waveforms against passive eavesdroppers performing brute-force demodulation to intercepted signals, under the assumption that eavesdroppers have no knowledge of chirp parameters (in AFDM) or the delay–Doppler grid configuration (in OTFS), such that they must search exhaustively over possible demodulation matrices. 
Analytical results show that the brute-force complexity scales as $\mathcal{O}(\sqrt{N})$ for OTFS and $\mathcal{O}(N^2)$ for AFDM, where $N$ is the number of subcarriers, indicating that AFDM has superior resilience over OTFS.
Bit error rate (BER) simulations confirm the analysis by showing that, with AFDM, the signal remains nearly undecodable at the eavesdropper, while OTFS allows partial signal recovery under equivalent conditions.
\end{abstract}

\begin{IEEEkeywords}
AFDM, OTFS, physical layer security, eavesdropper, parameter hopping, robustness.
\end{IEEEkeywords}


\section{Introduction}
\label{sec:intro}

Network security has often been restricted to complex key-based encryption schemes. However, keyless techniques, generally based on the signal-to-interference-plus-noise ratio (SINR) optimization, have recently emerged as promising approaches \cite{hamamreh19,wang19}, complementing traditional key-based methods. 
These methods usually rely on the use of spatial diversity, and relatively few studies address physical layer security (PLS) PLS inherent to the waveform itself.

As a parallel development, there has been a growing interest in affine frequency division multiplexing (AFDM) \cite{bemani2023affine,rou2025stcommag} and orthogonal time frequency space (OTFS) \cite{wei2021orthogonal, hadani17} over the past few years as two alternatives to orthogonal frequency division multiplexing (OFDM) for next-generation communication systems, as they better cope with doubly dispersive channels \cite{rou24,sui2025multi}.
The performance of both modulation schemes has been extensively compared in terms of bit error rate (BER), peak-to-average power ratio (PAPR), and capability of supporting integrated integrated sensing and communication (ISAC) \cite{yuan2022orthogonal,ranasinghe2024joint}, in addition to among other functionalities such as index modulation \cite{sui2025generalized,10129061,rou2024cpimafdm}.
However, their robustness against threats in the context of physical layer (PHY) security remains a largely open topic. 

For example, in AFDM, by leveraging permutations over the chirp sequences \cite{Rou_Arxiv25_CPAFDM}, a novel PHY security approach was presented in \cite{rou25afdmpls} which was shown to be virtually perfectly secure even against quantum-accelerated eavesdroppers due to the immense complexity in the combinatorial space.
Alternatively, techniques to realize PHY security over the conventional AFDM parameters (usually denoted as $c_1$ and $c_2$) are reported in \cite{Wang_ICC25,Tek_TVT25,Chen_ICC25,di25}, via parameter hopping of $(c_1,c_2)$. 
In \cite{Wang_ICC25}, a pseudo random sequence is selected for the pre-chirp parameter $c_2$ from a codebook, whereas in \cite{Tek_TVT25}, $c_2$ is securely generated at both legitimate transmitter and receiver sides based on their common hidden communication channel. 
In both studies, the security enhancement is based on the parameter $c_2$ only. Nevertheless, $c_1$ can also play a crucial role in PLS. In \cite{Chen_ICC25,di25}, the authors analyzed the range of $(c_1,c_2)$ to guarantee security and then evaluated the robustness of AFDM through simulations only. 
Similarly, based on the inherent parameters of the OTFS waveform, delay-Doppler precoding was proposed in \cite{liu24} for security enhancement of OTFS. However, to the best of our knowledge, beyond methods that improve the PHY security of AFDM and OTFS \cite{Wang_ICC25,Tek_TVT25,Chen_ICC25,di25,liu24}, no study has been evaluated the intrinsic robustness of these waveforms against eavesdropping to date.   

Therefore, in this paper, we analyze and compare the robustness of AFDM and OTFS modulations against eavesdroppers attempting to brute-force demodulate the leaked signals they receive from legitimate users. 
We consider a malicious user who performs blind demodulation by exhaustively testing all the possible modulation parameters, \emph{i.e.} the delay-Doppler grid size $(K,L)$ in OTFS, and the chirp parameters $(c_1,c_2)$ in AFDM, given that the number $N$ of subcarriers is known. 
We then assess the robustness in terms of the maximum number of attempts an eavesdropper has to perform to demodulate the signal properly. 
We show that the brute-force complexity of OTFS and AFDM scales as $\mathcal{O}(\sqrt{N})$ and $\mathcal{O}(N^2)$, respectively.
This analysis proves that AFDM is significantly more robust than OTFS against brute-force demodulation, because the chirp parameters $(c_1,c_2)$ are chosen within a continuous subset of $\mathbb{R}^2$. In contrast, the delay-Doppler grid size $(K,L)$ in OTFS corresponds to the limited number of divisors of $N$. 
It should be noted that the proposed study may serve as a reference to all parameter hopping-based methods \cite{Wang_ICC25,Tek_TVT25,Chen_ICC25} to assess their robustness, besides numerical results theoretically. 
Furthermore, simulation results validate the theoretical developments on the robustness and show the superiority of AFDM over OTFS, where it is shown that the BER of AFDM at the eavesdropper remains flat and undecodable for any SNR range, while it converges to a moderate signal recovery for OTFS, given the same number of demodulation attempts.

%
%
%
%
%
%


The rest of the paper is organized as follows: Section \ref{sec:system_model} presents both the AFDM and OTFS signal models and the eavesdropping scenario. The theoretical robustness analysis is developed in Section \ref{sec:robustness}. Simulation results are provided in Section \ref{sec:simu}, and conclusions are drawn in Section \ref{sec:conclusion}.   



\section{System Model}
\label{sec:system_model} 

This section introduces the signal models of both AFDM and OTFS modulations, as well as the scenario considered in this paper, involving legitimate users and a co-located eavesdropper attempting to demodulate the communications of the former. 


\subsection{AFDM and OTFS Signals Models}

Let us consider a multicarrier AFDM or OTFS signal consisting of $N$ subcarriers. Inspired by \cite{boudjelal25,savaux20256GNetafdm}, both waveforms can be interpreted as a precoded OFDM modulation scheme, whose transmitted signal can be expressed as  
\begin{equation}
    \textbf{x} = \mathbfcal{F}_{N}^H \textbf{Q} \textbf{d}, 
    \label{eq:xgen}
\end{equation} 
where the vector $\textbf{d} \in \mathbb{C}^N$ contains the data randomly taken from a constellation, and the matrix $\mathbfcal{F}_{N}^H$ is the IDFT matrix of size $N\times N$ containing the element $\frac{1}{\sqrt{N}}e^{2j\pi \frac{mn}{N}}$ at entry $(n,m)$. Then, the expression of the recoding matrix $\textbf{Q}$ depends on the considered modulation. For instance, we simply have $\textbf{Q} = \textbf{I}_N$ in OFDM. In AFDM, it is given by 
\begin{equation}
    \textbf{Q} = \mathbfcal{F}_{N}\boldsymbol{\Lambda}_{c_1} \mathbfcal{F}_{N}^H \boldsymbol{\Lambda}_{c_2}, 
    \label{eq:Qafdm}
\end{equation}
where $\boldsymbol{\Lambda}_{c_i} = \text{diag}([e^{2 j \pi c_i 0^2},..,e^{2 j \pi c_i (N-1)^2}]) \in \mathbb{C}^{N \times N}$, $i=1,2$, with $c_1, c_2 \in \mathbb{R}$ the so-called chirp parameters which can be set to achieve full diversity in time and frequency selective channels \cite{bemani2023affine,rou24}, as described in the following.

In OTFS, the matrix $\textbf{Q}$ is given by 
\begin{equation}
    \textbf{Q} = \mathbfcal{F}_{N}(\mathbfcal{F}_{L}^H \otimes \textbf{I}_K), 
    \label{eq:Qotfs}
\end{equation}
where $K \times L$ is the size of the delay-Doppler grid the data is mapped on. Interestingly, note that in OTFS, the delay-Doppler diversity is directly dependent on $K$ and $L$, and in turn on $N$ since $N=K \times L$. In contrast, it is related to $c_1$ and $c_2$ in AFDM, independently of $N$. 

Omitting the cyclic prefix (CP) or chirp-periodic prefix (CPP) addition and removal, the general input-output relation in SISO systems can be expressed as 
\begin{equation}
    \textbf{y} = \underbrace{\sum_{l=0}^{L_c-1}h_l \boldsymbol{\Delta}_{\theta_l}\boldsymbol{\Pi}^l \textbf{x}}_{\textbf{H} \textbf{x}}  + \textbf{w}, 
    \label{eq:receivedsignaly}
\end{equation}
where $\textbf{y} \in \mathbb{C}^{N \times 1}$ is the vector of the received signal, and $\textbf{w} \in \mathbb{C}^{N \times 1}$ is the vector of the additive white Gaussian noise with independent and identically distributed samples $w_n \sim \mathbb{C}\mathcal{N}(0,\sigma^2)$.

In turn, $\textbf{H} \in \mathbb{C}^{N \times N}$ is the channel matrix, where  $h_l$ is the $l$th channel path coefficient (possibly null if the channel is sparse), and $L_c$ is the channel length. Moreover, $\boldsymbol{\Delta}_{\theta_l} \in \mathbb{C}^{N \times N}$ is the diagonal matrix containing the samples $e^{2j\pi \frac{\theta_l n}{N}}$, where $\theta_l \in [0,\theta_{\max}]$ is the normalized Doppler shift (integer of fractional component), and $\theta_{\max}$ the maximum normalized Doppler shift. Then, $\boldsymbol{\Pi}$ is the forward cyclic-shift matrix. Note that the full-diversity property of AFDM holds if \cite{bemani21,rou24}: 
\begin{equation}
      \frac{\theta_{\max}}{N} \leq c_1 , \hspace{1cm} c_2 << \frac{1}{N}. 
    \label{eq:conditionsafdm}
\end{equation}
Furthermore, we note without loss of generality that $c_1 << 1$, which physically means that the number of chirps per symbol is significantly lower than the number of samples per symbol. Note that more details on how $c_1$ should be chosen in the context of PLS can be found in \cite{Chen_ICC25,di25}. 


\subsection{Scenario}

\begin{figure}[tbp]
\centering{\includegraphics[width=0.8\columnwidth]{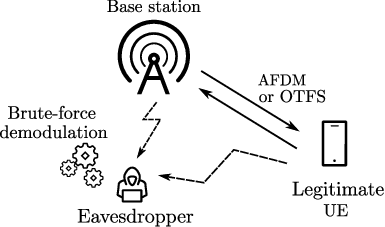}}
  \caption{Considered scenario in which passive eavesdroppers try to brute-force demodulate leaked communications links between a base station and legitimate UEs.}
\label{fig:scenario}
\vspace{-0.5em}
\end{figure}

We consider a scenario in which base stations (BSs) communicate with legitimate UEs of the network, while malicious eavesdroppers attempt to brute-force demodulate the leaked signals, as illustrated in Fig. \ref{fig:scenario}. These signals are assumed to be modulated using AFDM or OTFS, as previously described, and the eavesdropper is aware of the waveform it receives. Note that, even though we consider a cellular system comprising BS and UEs, the model is general enough to be extended to any type of communication, such as cell-free, device-to-device (D2D), vehicular-to-everything (V2X), side link, or mutli-input multi-output (MIMO) systems.

We deliberately assume a worst-case scenario (from the point of view of the legitimate stakeholders of the network) in which the eavesdropper is synchronized with the leaked signals it receives, and has a perfect knowledge of the channel $\textbf{H}$ between the transmitter (\emph{e.g.} a BS or a UE) and itself, as well as the number of subcarriers $N$. 
Note that this therefore limits CSI-based PHY security approaches \cite{Tek_TVT25} which depends on the hidden CSI from the eavesdroppers.
It results in the capability of perfect equalization, such that the equalized signal at the eavesdropper can be expressed as: 
\begin{equation}
    \hat{\textbf{x}} = \textbf{G}\textbf{y} = \textbf{x} + \textbf{G}\textbf{w}, 
    \label{eq:equalization}
\end{equation}
where $\textbf{G} \in \mathbb{C}^{N \times N}$ is the equalization matrix such as $\textbf{G}\textbf{H}=\textbf{I}_N$. The demodulation then performs the data recovery expressed as $\hat{\textbf{d}} = \textbf{Q}^{-1}\mathbfcal{F}_{N} \hat{\textbf{x}}$. In contrast, it is assumed that the modulation parameters (\emph{i.e.} $c_1$ and $c_2$ in AFDM, and $K$ and $L$ in OTFS), and in turn the decoding matrix $\textbf{Q}^{-1}$, are unknown to the eavesdropper. Consequently, the latter adopts a brute-force demodulation strategy to estimate the transmitted data $\textbf{d}$, which means that it exhaustively tests all the possible values of the modulation parameters until it recovers the data. Despite the exhaustive search may seem to be an oversimplified method, it is optimal in the sense of the maximum likelihood in blind estimation of unknown parameters. In the following, we analyze the robustness of OTFS and AFDM against such a brute-force demodulation.  


\section{Robustness Analysis}
\label{sec:robustness} 

In this section, we evaluate the robustness of the OTFS and AFDM modulation schemes in terms of the maximum number of attempts, denoted by $M_a$, that an eavesdropper should carry out to recover the transmitted data $\hat{\textbf{d}}$ from $\hat{\textbf{x}}$ in (\ref{eq:equalization}). It is worth emphasizing that we assess the inherent robustness of the waveforms, independently of additional secure methods as presented in \cite{Wang_ICC25,Tek_TVT25,Chen_ICC25,di25,liu24}, or independently of any other techniques that aim to secure the transmission at the data level, \emph{e.g.} by applying a pseudo-random matrix to $\textbf{d}$ directly. Since we focus on the robustness of the waveform against brute-force demodulation, we can deliberately omit the noise in this section, \emph{i.e.} $\textbf{w} \rightarrow \textbf{0}$ in (\ref{eq:equalization}). In other words, $M_a$ corresponds to the maximum attempts the eavesdropper should make to find the decoding matrix $\textbf{Q}'^{-1}$ such as $\textbf{Q}'^{-1}\textbf{Q}=\textbf{I}_N$, with a probability of $1$, given that $N$ is known. The larger the number $M_a$, the stronger the waveform. 


\subsection{OTFS}

In OTFS, the brute-force strategy consists in testing all the possible decoding matrices $\textbf{Q}^{-1}$ according to $K$ (or $L$ equivalently), and keep the set of matrices $\textbf{Q}'^{-1}$ (parametrized by $(K',L')$) leading to $\textbf{Q}'^{-1}\textbf{Q}=\textbf{I}_N$. The robustness of OTFS is given in Proposition \ref{prop:otfs}. 

\begin{proposition} 
\label{prop:otfs}
Given an OTFS signal composed of $N$ subcarriers and parametrized by $(K,L)$, its robustness against brute-force demodulation is given by 
\begin{equation}
    M_a = \sigma(N) \leq 2\sqrt{N},  
    \label{eq:Maotfs}
\end{equation}
where $\sigma(N)$ is the number of integer divisors of $N$. 
\end{proposition}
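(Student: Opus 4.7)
My plan is to prove the proposition in two steps: first establish that $M_a = \sigma(N)$ by directly counting the admissible grid sizes the eavesdropper must sweep, then derive the upper bound $\sigma(N) \leq 2\sqrt{N}$ via the classical divisor-pairing argument.

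For the first step, I would observe that any candidate pair $(K',L')$ tested by the eavesdropper must satisfy $K' L' = N$ in order for the factor $\mathbfcal{F}_{L'}^H \otimes \textbf{I}_{K'}$ in (\ref{eq:Qotfs}) to yield a precoding matrix of the correct dimension $N \times N$. Since fixing $K'$ uniquely determines $L' = N/K'$, the set of valid candidates is in bijection with the set of positive divisors of $N$, and therefore has cardinality $\sigma(N)$. Under the worst-case assumption adopted in Section~\ref{sec:robustness}, the correct factorization is discovered only on the last trial, giving $M_a = \sigma(N)$.

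For the second step, I would use the involution $d \mapsto N/d$ on the divisors of $N$: every divisor $d$ satisfies either $d \leq \sqrt{N}$ or $N/d \leq \sqrt{N}$, and there are at most $\lfloor \sqrt{N} \rfloor$ positive integers in $[1,\sqrt{N}]$. Since each divisor $d \leq \sqrt{N}$ is paired with exactly one divisor $N/d \geq \sqrt{N}$, we conclude that $\sigma(N) \leq 2 \lfloor \sqrt{N} \rfloor \leq 2\sqrt{N}$, closing the bound.

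The argument is largely routine number theory; the step requiring the most attention is the first, where one must confirm that every admissible $(K',L')$ genuinely yields a distinct decoding matrix $\textbf{Q}'^{-1}$ so that no candidate can be discarded \emph{a priori}. This is ensured by the explicit dependence of $\mathbfcal{F}_{L'}^H \otimes \textbf{I}_{K'}$ on both factors, which makes distinct factorizations of $N$ produce structurally different precoding matrices. Hence the eavesdropper's brute-force search space cannot be compressed below $\sigma(N)$, and the bound $M_a \leq 2\sqrt{N}$ follows.
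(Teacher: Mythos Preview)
Your proposal is correct and follows essentially the same approach as the paper: both arguments identify the brute-force search space with the divisor set of $N$ via the constraint $K'L'=N$, and both obtain the bound $\sigma(N)\leq 2\sqrt{N}$ through the standard pairing $d\mapsto N/d$ of divisors below and above $\sqrt{N}$. Your version is slightly more explicit in justifying that distinct factorizations yield distinct precoding matrices, while the paper phrases this as uniqueness of the solution to $\textbf{Q}'^{-1}\textbf{Q}=\textbf{I}_N$, but the substance is the same.
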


\begin{proof}
Note that this result is a direct consequence of the so-called Dirichlet divisor problem, whose more precise upper bounds can be found in \cite{huxley03}. First, we can readily show from (\ref{eq:Qotfs}) that the solution to $\textbf{Q}'^{-1}\textbf{Q}=\textbf{I}_N$ is unique and is given by $(K',L')=(K,L)$. Thus, the number of attempts performed by an eavesdropper directly depends on the number of integer divisors of $N = K \times L$, due to the rectangular delay-Doppler grid structure of OTFS.
Then, since the number of divisors of $N$ is twice the number of divisors of $N$ between 1 and $\sqrt{N}$, $M_a$ can be upper-bounded by $2\sqrt{N}$, which concludes the proof. 
\end{proof}

We deduce from (\ref{eq:Maotfs}) that the robustness of OTFS exhibits an inverse quadratic growth with respect to the number of subcarriers, which then becomes weak for low $N$ values. Furthermore, in practice, $M_a$ in (\ref{eq:Maotfs}) largely overestimates the possible number of solutions for $\textbf{Q}^{-1}$ because: 
i) we know from Dirichlet that the average number of divisors $\sigma(N)$ of $N$ is rather asymptotically equal to $\ln(N) +2\gamma -1 \leq 2\sqrt{N}$, where $\gamma$ is the Euler-Mascheroni constant i.e., $M_a = 2\sqrt{N}$ is a loose upper bound, and ii) we know from \cite{rou25afdmpls} that only a subset of all possible values $(K,L)$ should be considered to guarantee delay-Doppler diversity (\emph{e.g.} we know that if $K=1$, OTFS is exactly equivalent to OFDM).


\subsection{AFDM}

In AFDM, the brute-force demodulation \emph{a priori} involves an exhaustive joint search of $(c_1,c_2)$ in a continuous subset $\Omega_c \in \mathbb{R}^2$, which is theoretically impractical within a reasonable time. In fact, $M_a$ tends to infinity due to the continuous nature of $\Omega_c$. However, to derive a finite value of $M_a$, we can first notice that the demodulation through $\textbf{Q}^{-1}\mathbfcal{F}_{N} =   \boldsymbol{\Lambda}_{c_2}^*\mathbfcal{F}_{N}\boldsymbol{\Lambda}_{c_1}^*$ is a sequence of three distinct operations: the dechirp in the time domain, then the DFT, and finally the dechirp in the frequency domain. It follows that $c_1$ and then $c_2$ can be sequentially tested. Moreover, under the assumption that $c_1$ is known, an error on $c_2$ only induces a phase rotation in the demodulated data $\hat{\textbf{d}}$, which can be easily estimated and corrected, for instance by the maximum lakelihood estimator. 

In contrast, an error on $c_1$ prevents the possible demodulation in general. Thus, $c_1$ is the most limiting parameter in the exhaustive search of $(c_1,c_2)$ allowing for a brute-force demodulation by an eavesdropper. For this reason, in this section, we focus on $c_1$, although $c_2$ can also play a role in PHY security, as shown in \cite{Wang_ICC25,Tek_TVT25,Chen_ICC25,di25}. 

To evaluate $M_a$, we characterize how far from $c_1$ a given value $c_1'$ tested by the eavesdropper must be to fail to demodulate the signal. To this end, we can restrict the possible range of $c_1'$ to $c_1' \in [\frac{\theta_{\max}}{N},D]$, where $D<<1$, according to the conditions given in (\ref{eq:conditionsafdm}). Then, we express the robustness of AFDM in Theorem \ref{theor:afdm}. 

\begin{theorem}
\label{theor:afdm}
Given an AFDM signal composed of $N$ subcarriers and parametrized by $(c_1,c_2)$, its robustness against brute-force demodulation is given by 
\begin{equation}
    M_a = \frac{1}{N}\left[\pi(DN-\theta_{\max})(N-1)^2\right]. 
    \label{eq:MaAFDM}
\end{equation}
\end{theorem}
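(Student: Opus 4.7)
The plan is to first reduce the brute-force cost to a one-dimensional search over $c_1$ alone, then identify the smallest mismatch $\delta c$ in $c_1$ that prevents demodulation, and finally divide the admissible search interval by $\delta c$ to obtain $M_a$.

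For the first step, I would lean on the observation already made in the paragraph preceding the theorem: when $c_1$ is correct but $c_2$ is not, the residual error on $\hat{\textbf{d}}$ is a pure phase rotation that can be undone by a maximum-likelihood phase estimator, so the brute-force cost is confined to the $c_1$ search alone. Writing the decoder as $\boldsymbol{\Lambda}_{c_2}^{*}\mathbfcal{F}_{N}\boldsymbol{\Lambda}_{c_1'}^{*}$ and substituting the true transmitted signal from (\ref{eq:xgen})--(\ref{eq:Qafdm}), the mismatched factors collapse into a single diagonal residual operator $\boldsymbol{\Lambda}_{\Delta c}$, with $\Delta c := c_1 - c_1'$, whose $n$-th entry is the chirp perturbation $e^{2 j \pi \Delta c\, n^2}$.

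The central analytical step is to determine the smallest $|\Delta c|$ for which this perturbation already destroys decodability. The natural criterion uses the maximum phase excursion of the residual chirp, attained at $n=N-1$ and equal to $2\pi|\Delta c|(N-1)^2$; imposing that this excursion reaches a threshold corresponding to $2$ radians (equivalently, a $1/\pi$-fraction of a full revolution) yields the resolution $\delta c = \frac{1}{\pi(N-1)^2}$, which is precisely where the factor $\pi$ in (\ref{eq:MaAFDM}) enters. Any $c_1'$ falling inside a window of half-width $\delta c$ around the true $c_1$ is then effectively indistinguishable from the correct value, and the eavesdropper cannot refine its guess more finely.

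With $\delta c$ in hand, the count is immediate: by the AFDM full-diversity condition (\ref{eq:conditionsafdm}), the admissible range of $c_1'$ is $[\theta_{\max}/N,\,D]$, of length $(DN-\theta_{\max})/N$, so dividing by $\delta c$ yields $M_a=\pi(DN-\theta_{\max})(N-1)^2/N$, matching (\ref{eq:MaAFDM}). The main obstacle is the second step: rigorously justifying the threshold on the phase excursion that produces exactly the prefactor $\pi$, rather than some nearby constant. A looser criterion overestimates $M_a$ and a tighter one underestimates it, so the threshold must be derived from a careful accounting of when the chirp perturbation pushes the decoded symbols outside their correct decision regions, rather than chosen heuristically, and must be consistent with the phase-rotation absorption argument that eliminated $c_2$ from the search in the first place.
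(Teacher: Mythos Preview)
Your proposal is essentially the paper's own argument: reduce to a one-dimensional search in $c_1$, isolate the residual diagonal chirp $e^{2j\pi\Delta_1 n^2}$, bound it at $n=N-1$, and divide the admissible interval $[\theta_{\max}/N,\,D]$ by the resulting step. On the point you flag as the obstacle, the paper obtains the constant not from a bare phase threshold but from the tolerance $|e^{2j\pi\Delta_1 n^2}-1|\leq|\varepsilon|$, linearized to $|\Delta_1|\leq|\varepsilon|/(2\pi(N-1)^2)$ and then upper-bounded by setting $|\varepsilon|=1$; this makes $\tfrac{1}{2\pi(N-1)^2}$ the \emph{half}-width and $\tfrac{1}{\pi(N-1)^2}$ the \emph{step}, so your labeling of $\delta c=\tfrac{1}{\pi(N-1)^2}$ as a ``half-width'' is off by a factor of two even though your final division happens to land on the right $M_a$.
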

\begin{proof}
First, we express the samples $x_n$ of $\textbf{x}$ in (\ref{eq:xgen}) for any $n=0,1,N-1$ as 
\begin{equation}
    x_n = \frac{1}{\sqrt{N}} \sum_{m=0}^{N-1}d_m e^{2j\pi(c_1 n^2 + c_2 m^2 + \frac{mn}{N})}, 
    \label{eq:xn}
\end{equation}
where $m$ is the subcarrier index, and $d_m$ is the $m$th element of the vector $\textbf{d}$. 

By assuming that the eavesdropper attempts to brute force the received AFDM signal using $(c_1',c_2')$, the sample $\hat{d}_k$ of $\hat{\textbf{d}}$, $k=0,1,..,N-1$, is given by the DAFT of $x_n$ as 
\begin{equation}
    \hat{d}_k = \frac{1}{\sqrt{N}} \sum_{n=0}^{N-1} x_n e^{-2j\pi(c_1' n^2 + c_2' k^2 + \frac{kn}{N})}.  
    \label{eq:hdm1}
\end{equation}

Then, by substituting (\ref{eq:xn}) into (\ref{eq:hdm1}), and defining $\Delta_1 = c_1-c_1'$ and $\Delta_2 = c_2 - c_2'$, we obtain: 
\begin{align}
    \hat{d}_k =& \frac{1}{N} \sum_{n=0}^{N-1} \sum_{m=0}^{N-1} d_m e^{2j\pi((c_1-c_1') n^2 + c_2m^2-c_2' k^2 + \frac{(m-k)n}{N})} \nonumber \\
    =& \frac{e^{2j\pi\Delta_2k^2}}{N} \sum_{m=0}^{N-1} d_m e^{2j\pi c_2(m^2-k^2)} \underbrace{\sum_{n=0}^{N-1} e^{2j\pi(\Delta_1 n^2 + \frac{(m-k)n}{N})}}_{S_1}, 
    \label{eq:hdm2}
\end{align}
where $S_1$ is defined for clarity.

It must be noted that $S_1$ simplifies only for integer $\Delta_1$ values, since it reduces to the sum of the $N$th roots of unity.
In that case $S_1 = N \delta_{m,k}$ where $\delta_{m,k}$ is the Kronecker delta, and hence $\hat{d}_k = e^{2j\pi\Delta_2k^2} d_k$, where the term $e^{2j\pi\Delta_2k^2}$ corresponds to the aforementioned phase rotation. However, the equality $S_1 = N \delta_{m,k}$ corresponds to either the perfect match $c_1' = c_1$, which is unattainable because $c_1$ is a continuous variable, or $\Delta_1 \in \mathbb{N}^*$ and then $c_1' \notin [\frac{\theta_{\max}}{N},D]$, which is inconsistent with the assumption $c_1' \in [\frac{\theta_{\max}}{N},D]$.     

Despite the equality $c_1' = c_1$ (\emph{i.e.} $\Delta_1 = 0$) cannot theoretically be achieved, the eavesdropper can demodulate the received signal with an acceptable error $|\Delta_1| <<1$, which can be characterized by: 

\begin{equation}
    \left| e^{2j\pi\Delta_1 n^2} -1 \right| \leq |\varepsilon|, 
    \label{eq:error}
\end{equation}
where $|\varepsilon|<< 1$ highlights the acceptable error level, depending on other modulation parameters, but not dealt with in this paper (\emph{e.g.} we can reasonably assume that $|\varepsilon|$ can take a higher value for a low modulation and coding schemes (MCS) than high MCS).  Since (\ref{eq:error}) should hold for any $n=0,1,..,N-1$, and $\Delta_1<<1$, we use the series expansion of the exponential function to derive the upper bound of $\Delta_1$ with $n=N-1$ as

\begin{equation}
    (\ref{eq:error}) \Leftrightarrow |\Delta_1| \leq \frac{|\varepsilon|}{2 \pi (N-1)^2} \leq \frac{1}{2 \pi (N-1)^2},  
    \label{eq:error2}
\end{equation}
where ``1'' can replace $|\varepsilon|$ to obtain an upper bound that only depends on $N$.

Notice that in this case, the demodulated data becomes $\hat{d}_k = e^{2j\pi\Delta_2k^2} d_k (1 + \varepsilon)$. In AFDM, the brute-force strategy then consists in testing all possible values $c_1'$ within $[\frac{\theta_{\max}}{N},D]$ with a step of $2|\Delta_1| = \frac{1}{\pi (N-1)^2}$. Then, the maximum number of attempts $M_a$ corresponds to the ratio of the range of the search set $[\frac{\theta_{\max}}{N},D]$ and the step $2|\Delta_1|$, leading to (\ref{eq:MaAFDM}), which concludes the proof.  
\end{proof}

We deduce from (\ref{eq:MaAFDM}) that the robustness of AFDM is quadratically proportional to the number of subcarriers $N$ and also linearly proportional to the size of the search set through $D$. Thus, AFDM should be much stronger than OTFS against passive eavesdroppers, therefore limiting a brute-force demodulation of a leaked signal in real-time. Unlike OTFS, $M_a$ in (\ref{eq:MaAFDM}) is independent to chirp parameters $c_1$ and $c_2$, hence AFDM achieves both robustness against eavesdroppers and full delay-Doppler diversity. Furthermore, note that $c_2$ must also be dechirped on top of $c_1$ in AFDM to complete the demodulation, thus strengthening the waveform, as addressed in \cite{Wang_ICC25,Tek_TVT25,Chen_ICC25}.

\section{Simulation Results}
\label{sec:simu}

In this section, we validate the theoretically derived results through simulations and we evaluate the performance of the AFDM and OTFS waveforms in terms of the achievable BER at the eavesdropper. In all simulations, an uncoded quadrature phase shift keying (QPSK) modulation has been considered. Simulations have been performed using MATLAB, and the results have been averaged over at least $10^3$ independent Monte-Carlo runs. 

\begin{figure}[tbp]
\centering
  \subfigure[BER versus $K'$ in OTFS, for $N=64,128$ and $K=16,64$. It can be observed that the BER is minimum for $K'=K$. ]
  {\includegraphics[width=\columnwidth]{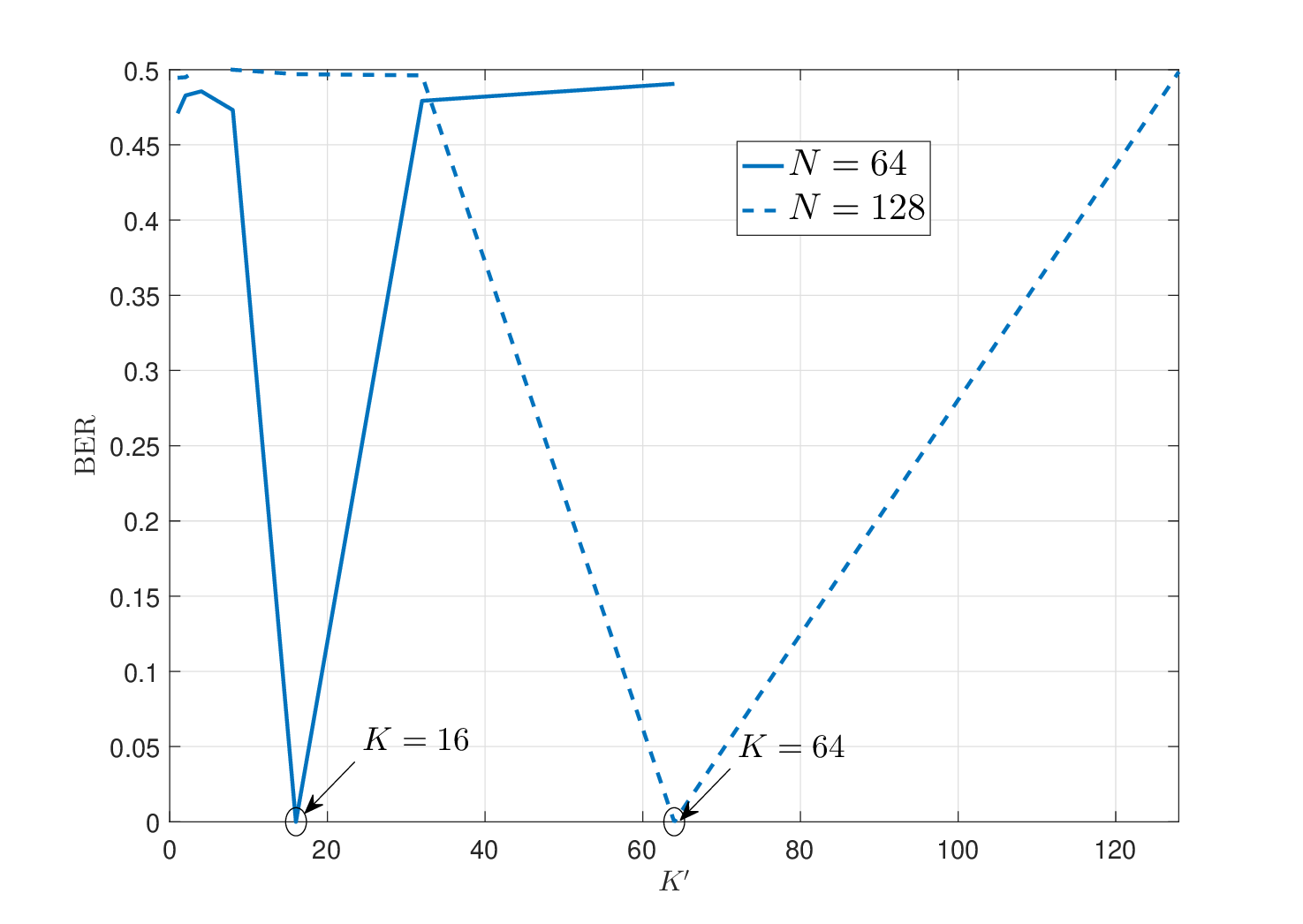}}\quad
  \subfigure[BER versus $c_1'$ in AFDM, for $N=64,128$ and $c_1 = 0.2$. It can be observed that the value of $\Delta_1$ matches the theoretical one.]
  {\includegraphics[width=\columnwidth]{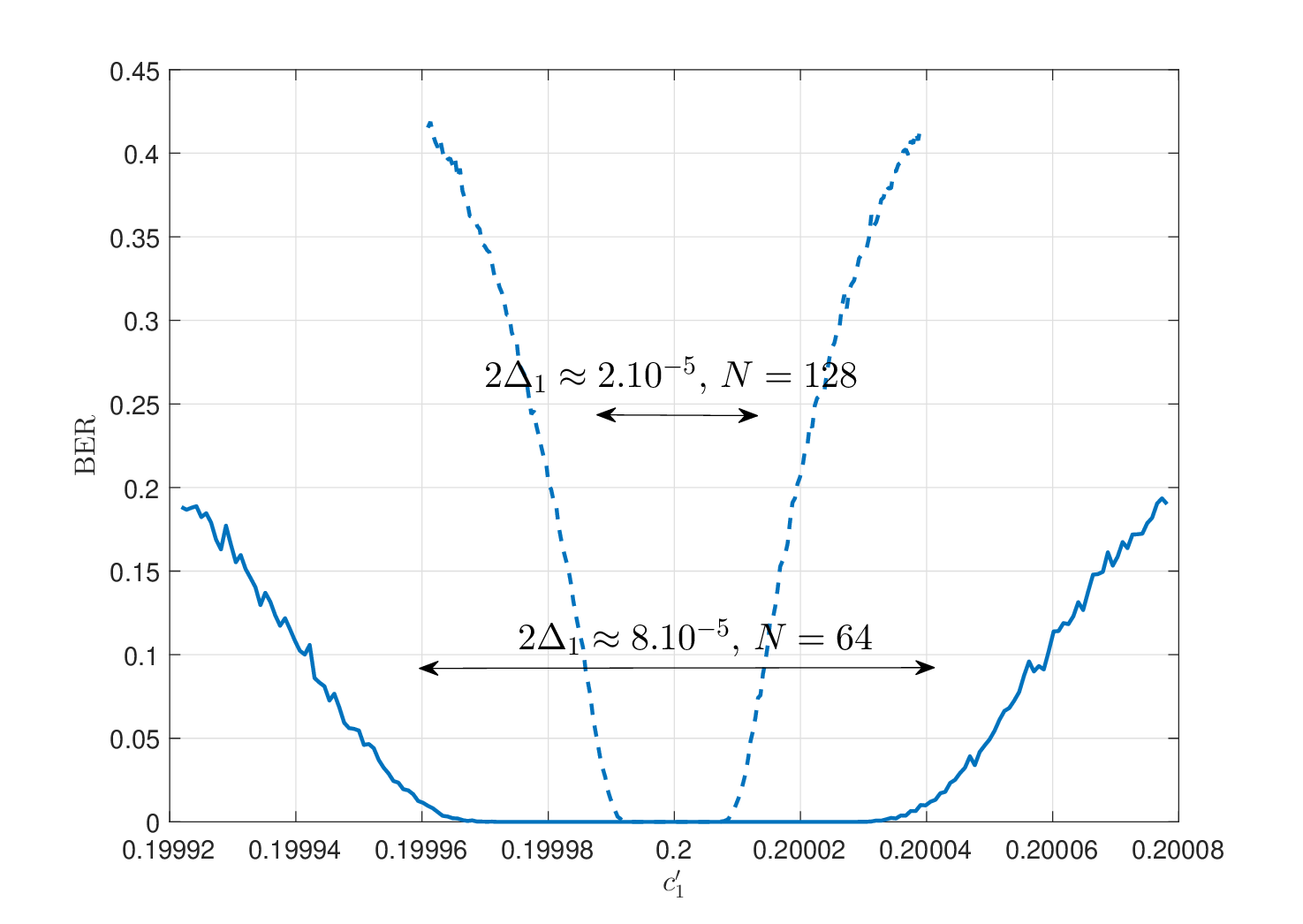}}
\caption{BER versus demodulation parameters $K'$ in (a) OTFS and (b) $c_1'$ in AFDM, for $N=64,128$; SNR=$25$ dB, and QPSK modulation.}
\label{fig:bervsparam}
\vspace{-1em}
\end{figure}

 Fig. \ref{fig:bervsparam} shows the BER versus the demodulation parameters $K'$ in OTFS (a), and $c_1'$ in AFDM (b), for $N\in \{64,128\}$ and $K \in \{16,64\}$, and in an additive white Gaussian noise (AWGN) environment such as SNR$=25$~dB. In Fig. \ref{fig:bervsparam}(a), the OTFS parameter $K$ is set to $K=16$. The cardinality of the set to be tested by the eavesdropper to properly demodulate the OTFS signal is $\sigma(\{64,128\})=\{7,8\}$. We observe that the only solution that minimizes the BER is $K'=K$, which confirms the weakness of OTFS against brute-force demodulation. In Fig. \ref{fig:bervsparam}(b), the AFDM parameters $c_1$ and $c_2$ are set to $c_1=0.2$ and $c_2=10^{-3}$, respectively. We assume that the eavesdropper \emph{a priori} knows $c_2$ or can properly estimate this pre-chirp parameter, so it is omitted in the simulations. Thus, it focuses on brute-force demodulation by testing $c_1'$ values. The range of $c_1'$ values has been arbitrarily restricted around $c_1 \pm 8.10^{-5}$ in Fig. \ref{fig:bervsparam}(b) for the sake of clarity. We can observe that the BER reaches a minimum value that spans over $2\Delta_1$ defined in (\ref{eq:error2}). Moreover, other series of simulations show that: i) no other local minimum is achieved when $c_1'$ varies in a wider range of values, and ii) the BER variations become sharper as the constellation size increases. This validates the analysis leading to the upper-bound in (\ref{eq:error2}) and in turn (\ref{eq:MaAFDM}).

To further validate the robustness analysis, Fig. \ref{fig:bervssnr} shows the BER versus SNR (dB) achieved at the eavesdropper for both AFDM and OTFS modulations. Two different setups are considered: AWGN and multipath channel, both using $N=128$ subcarriers. In the latter case, a four-tap channel with $\theta_{\max}=0.3$ is considered, which is equalized at the receiver side using the MMSE equalizer. To fairly compare OTFS and AFDM against brute-force demodulation, we assume that the eavesdropper can test $M_a = \sigma(128) = 8$ values of $K'$ in OTFS and the same number of different $c_1'$ values in AFDM. In the latter case, $c_1'$ is randomly chosen within the set $[\frac{\theta_{\max}}{N},0.3]$, with a minimum spacing of $2\Delta_1$ between two attempts. It can be observed that the BER of OTFS converges to zero when the SNR increases in both channel models, reflecting that $M_a = \sigma(128) = 8$ corresponds to the exhaustive possible values of $K'$, so that the eavesdropper inevitably properly demodulates the OTFS signal. In contrast, the BER of AFDM keeps a value of about $0.5$, which shows that $8$ attempts are largely insufficient to brute-force demodulate the AFDM signal. In fact, more than $10^4$ attempts would be necessary. This ultimately proves that AFDM is much stronger than OTFS against passive eavesdropping.     

\begin{figure}[tbp]
\centering{\includegraphics[width=\columnwidth]{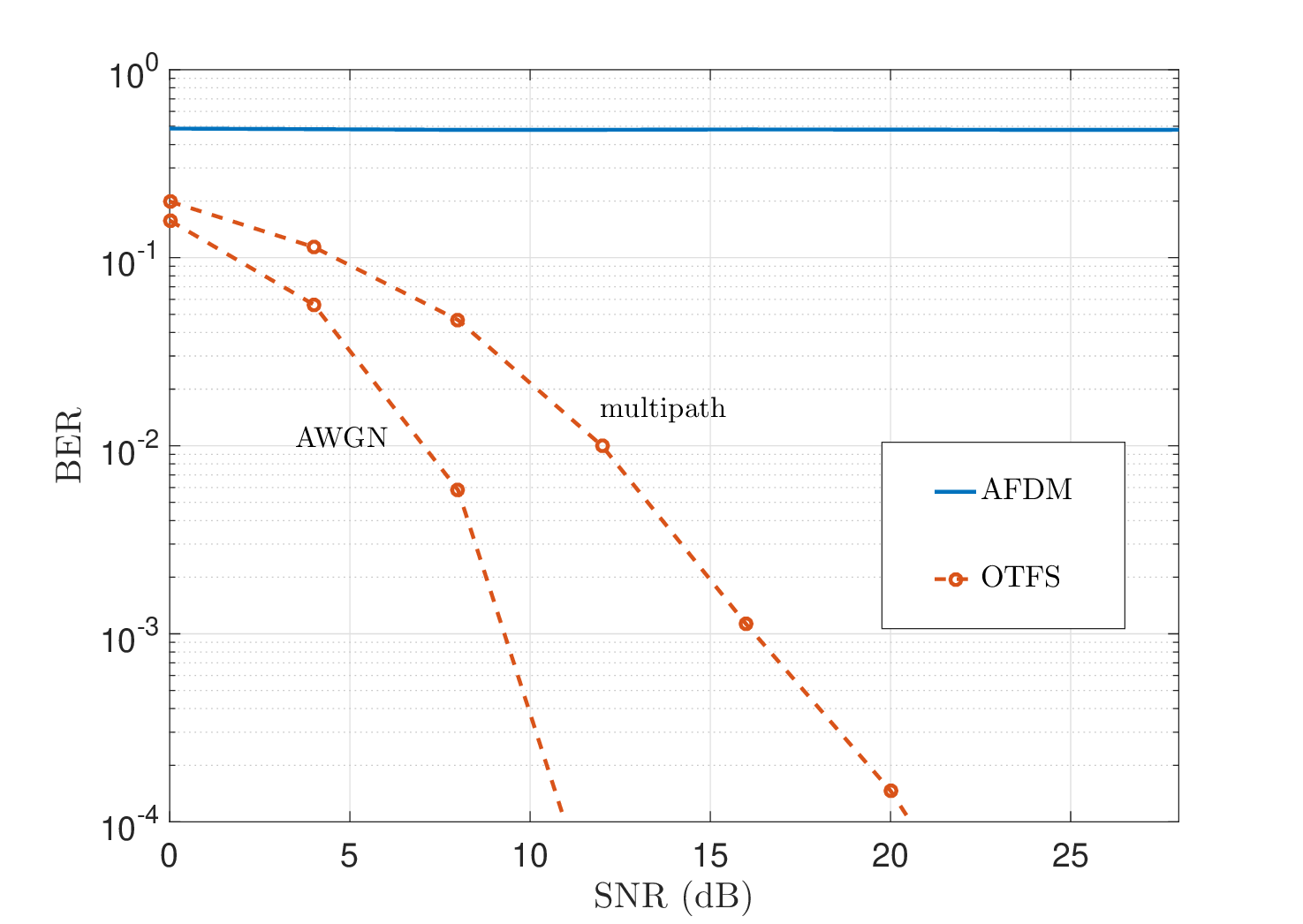}}
  \caption{BER versus SNR (dB) achieved at eavesdropper considering OTFS and AFDM, with $N=128$ in AWGN and multipath channels, using attempts $M_a = \sigma(128) = 8$.}
\label{fig:bervssnr}
\vspace{-1em}
\end{figure}




\section{Conclusion}
\label{sec:conclusion} 

We investigated the robustness of both AFDM and OTFS modulations against eavesdropping, in terms of the maximum number of attempts required for a passive eavesdropper to demodulate the signals via brute-force search, in the absence of any additional PLS method.
It was shown that, for a signal composed of $N$ subcarriers, the corresponding complexity scales as $\mathcal{O}(\sqrt{N})$ and $\mathcal{O}(N^2)$ for OTFS and AFDM, respectively.
This result is due to the nature of the modulation parameters: in OTFS, the delay-Doppler size $(K,L)$ is chosen within the divisors of $N$, whereas in AFDM, the chirp parameters $(c_1,c_2)$ are selected within a continuous subset of $\mathbb{R}^2$.
The analysis, which can also be used to assess the robustness performance of PLS techniques in AFDM and OTFS, indicates that AFDM has an inherent advantage over OTFS in terms privacy.
Simulation results were shown which validated the theoretical analysis and confirmed through the BER that AFDM significantly outperforms OTFS in terms of PHY security.  



\bibliographystyle{IEEEtran}
\bibliography{mabiblio}
\end{document}